\newtheorem{theorem}{Theorem}[section]
\newtheorem{deff}[theorem]{Definition}
\newtheorem{lemma}[theorem]{Lemma}  
\newtheorem{conj}[theorem]{Conjecture}
\newcommand{\qedsymb}{\hfill{\rule{2mm}{2mm}}}  
\newenvironment{proof}[1][]{\begin{trivlist}  
  \item[\hspace{\labelsep}{\bf\noindent Proof#1:\/}] 
    }{\qedsymb\end{trivlist}}
\newcommand{\ignore}[1]{}
\newcommand{\QMA}{\mathsf{QMA}}
\newcommand{\NP}{\mathsf{NP}}
\newcommand{\norm}[1]{{\| #1 \|}}  
\newcommand{\ket}[1]{{ |{#1} \rangle }}  
\newcommand{\bra}[1]{{ \langle {#1} | }}
\newcommand{\inprod}[2]{\langle {#1} | {#2}\rangle}
\newcommand{\poly}{\mathrm{poly}} 
\newcommand{\EqDef}{\stackrel{\mathrm{def}}{=}}
\newcommand{\QSAT}{\mathrm{QSAT}}
\newcommand{\Qdit}{\mathbbm{C}^d}
\DeclareMathOperator{\Tr}{Tr}
\newcommand{\Eq}[1]{Eq.~(\ref{#1})}
\newcommand{\Lem}[1]{Lemma~\ref{#1}}
\newcommand{\Ref}[1]{Ref.~\cite{#1}}
\begin{document}

\title{A note about a partial no-go theorem for quantum PCP}
\date{\today}
\author{Itai Arad}
\maketitle 

\begin{abstract}

  This is not a disproof of the quantum PCP conjecture!

  In this note we use perturbation on the commuting Hamiltonian
  problem on a graph, based on results by Bravyi and Vyalyi, to
  provide a partial no-go theorem for quantum PCP. Specifically, we
  derive an upper bound on how large the promise gap can be for the
  quantum PCP still to hold, as a function of the non-commuteness of
  the system. As the system becomes more and more commuting, the
  maximal promise gap shrinks.
  
  We view these results as possibly a preliminary step towards
  disproving the quantum PCP conjecture posed in \cite{ref:Aha09}. A
  different way to view these results is actually as indications
  that a critical point exists, beyond which quantum PCP indeed
  holds; in any case, we hope that these results will lead to
  progress on this important open problem.
\end{abstract}

\section{Introduction}
The PCP theorem is arguably the most important development in
computational complexity over the last two decades. In a nutshell,
what it says is that given a constraint satisfaction problem (CSP),
it can be efficiently replaced by one with comparable size, such that if the
original one was satisfiable, the new one is too, whereas if the
original one had at least one violation in each assignment, then any
assignment to the new CSP must violate a \emph{constant fraction}
(!) of the constraints. Consequently, the problem of deciding
whether a CSP is satisfiable or is violating a constant fraction of
its constraint is NP-hard.

Is there a quantum analogue to this remarkable theorem? This is
perhaps the most important open question in quantum Hamiltonian
complexity, and one of the central problems in quantum complexity in
general. Both a proof and disproof of this conjecture would arguably
yield deep insights into the basic notions of quantum mechanics,
such as entanglement, no-cloning of information, and the quantum to
classical transition on large scales.

In this note we present a result that might be seen as a weak
evidence \emph{against} the existence of a quantum PCP theorem.
Hopefully, it may serve as a starting point for a more general
framework for disproving this conjecture, or alternatively, as a
starting point for better clarifications of the conditions for a
quantum PCP theorem to hold. Before stating the result, we first
state what is meant by a quantum PCP theorem.

\subsection{Background on the quantum PCP conjecture} 

The quantum PCP conjecture was first stated formally in
\Ref{ref:Aha09}. Here we shall roughly follow their presentation.
The quantum analog of a classical CSP is the QSAT problem, which is
a special instance of the local-Hamiltonian problem. In that
problem, we are give a $k$-local Hamiltonian over a system of $n$
qubits $H=\sum_{i=1}^M Q_i$ that is made of $k$-local projections
$Q_i$ with $M=\poly(n)$. We are promised that the ground energy of
the system is either 0 (all quantum constrains are satisfied) or it
is above some constant $a=1/\poly(n)$. Just like its classical
counter-part, this problem is known to be quantum
NP-complete\footnote{It is in fact $\QMA_1$ complete and not $\QMA$
complete, which is the more natural ``quantum-$\NP$'' class, but we
shall not be bothered here with this technical issue. For more
information on this subtle point, see Refs.~\cite{ref:Bra06,
ref:Aha09}.}. 

A quantum PCP theorem would state that even if $a = rM$, for some
constant $0<r<1$, the problem is still quantum-$\NP$ hard to decide. In
other words, it is quantum-$\NP$ hard to distinguish between the case
when the system is completely satisfiable, or when, roughly, a
fraction $r$ of it can not be satisfied. 

Formally, we may define
\begin{deff}[The $r$-gap $k$-$\QSAT$ problem]
\label{def:kr} Let $r\in (0,1)$ be some constant.
  We are given a $k$-local Hamiltonian $H=\sum_{i=1}^M Q_i$
  over $n$ qubits, where the $Q_i$ are $k$-local projections and
  $M=\poly(n)$. We are promised that the ground energy of $H$ is
  either 0 (a YES instance) or is greater than $rM$ (a NO instance).
  We are asked to decide which is which.
\end{deff}
Then the Quantum PCP conjecture is
\begin{conj}[Quantum PCP]
\label{conj:qpcp}
  There exist $(k,r)$ for which Problem~\ref{def:kr} is $\QMA_1$ hard.
\end{conj}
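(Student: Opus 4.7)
Since Conjecture~\ref{conj:qpcp} is a famous open problem rather than a theorem with a known proof, the most honest thing I can offer is a research program rather than a complete strategy. The natural plan is to attempt a quantum analog of Dinur's combinatorial proof of the classical PCP theorem. One would start from the $\QMA_1$-completeness of $k$-$\QSAT$ with inverse-polynomial promise gap, guaranteed by a quantum Cook--Levin type reduction as in \cite{ref:Bra06}, and iteratively amplify the gap while keeping the locality $k$ a constant and the total number of projections at most $\poly(n)$.

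The three ingredients that would need to be quantized are (i) a \emph{preprocessing} step that regularizes the underlying interaction graph and turns it into a bounded-degree expander; (ii) a \emph{gap amplification} step in which new local projections are indexed by short random walks on this graph; and (iii) an \emph{alphabet/locality reduction} step that brings the locality of the amplified Hamiltonian back down to a constant. Preprocessing can be handled by appropriate equality gadgets that project pairs of qubits onto consistent subspaces, which is the quantum analog of the classical identification gadgets and does not need cloning. Gap amplification can be approached via a quantum detectability lemma together with a random-walk construction of the kind developed by Aharonov, Arad, Landau and Vazirani, which yields a constant-factor amplification per iteration at the cost of a polylogarithmic blow-up in locality.

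The hard step, and the one I expect to be the main obstacle, is the quantum analog of classical composition with an assignment tester. Classically, after each amplification round one reads out a candidate assignment to a high-locality constraint via a locally decodable encoding and verifies it with a constant-size inner PCP. Quantumly, the corresponding primitive would have to query a small number of qubits of an entangled witness, extract enough information to check a projection on many qubits, and leave the encoded state essentially undisturbed --- a combination that is in tension with both no-cloning and with the monogamy of entanglement. No suitable quantum locally testable code is currently known, and indeed the structural results exploited in this note (following Bravyi and Vyalyi on commuting local Hamiltonians) suggest that the natural composition routes become obstructed precisely when the constraints are close to commuting. Constructing such a quantum assignment tester, or ruling one out, is essentially the full content of the conjecture itself, which is why the problem has remained open.
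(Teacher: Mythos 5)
You are right that this statement is a conjecture, not a theorem, and the paper offers no proof of it --- nor could it, since the quantum PCP conjecture remains open. Your recognition of this is the essential correct observation. In fact, the main result of the paper (Theorem~\ref{thm:main}) points in the \emph{opposite} direction: it is a partial no-go theorem showing that for nearly commuting $2$-local systems on bounded-degree graphs, the $r$-gap problem lies in $\NP$ for sufficiently large $r$, via a perturbative reduction to the commuting case of Bravyi and Vyalyi. Your sketch of a Dinur-style program (preprocessing to expanders, random-walk gap amplification, alphabet/locality reduction) and your identification of the composition/assignment-tester step as the main obstruction match the discussion in the paper's introduction, including the observation --- attributed there to \Ref{ref:Aha09} --- that quantizing the size-increasing steps appears to conflict with no-cloning, and that the gap amplification step itself has a quantum analog. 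One thing worth adding to your account: the paper emphasizes Hastings's physical heuristic (free energy at finite temperature becoming negative, loss of long-range entanglement) as a second, independent reason for skepticism about the conjecture, complementing the combinatorial obstructions you describe.
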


To prove that such a problem is $\QMA_1$ hard, we would like to show
an efficient reduction of another $\QMA_1$-hard problem to it. It is
natural to start with the $k$-QSAT problem that was described above.
We would like to find an efficient transformation that takes a
$k$-QSAT problem and turns it into a $r$-gap $k$-QSAT problem such that if
the original system was satisfiable (ground energy is zero), then so
is the new system. On the other hand, if it was not satisfiable, with
a ground energy above $a=1/\poly$, then the ground energy of
the new system would be above $rM$. 

This type of transformation is called ``Gap Amplification'', because
it amplifies the promise gap of the problem. It is precisely this
type of transformation that was used iteratively in Dinur's proof of
the classical PCP theorem \cite{ref:Din07}. Let us describe this
transformation in some more details. Dinur achieves gap
amplification by an iterative process that amplifies gap by some
constant factor $>1$ at each round. Each such iteration is made of 3
steps. One step amplifies the promise gap of the system at the
expense of making it much less local. The purpose of the two other
steps is to fix this, restoring the locality of the system, without
compromising too much the gap amplification of the first step.

The entire process is carried over a CSP that is defined on an
expander graph, and in addition, the entire proof is very
combinatorial in nature. These two facts make it a promising outline
for a quantum proof under the natural mapping of classical
constraints to projections.  Indeed, a first step in that direction
was taken in \Ref{ref:Aha09}, where it was  shown that essentially the
amplification step in Dinur's proof can be done also quantumly. 
Like the classical proof, the quantum proof relies on
expander graphs. The quantization of the two other steps, however,
remains an open problem.

\subsection{Reasons for doubts in a quantum PCP theorem} 

In the attempts to prove the quantum analogue of Dinur's proof, it
seems hard to quantize any classical step that increases the size of
the system; such steps seem to conflict with the quantum no-cloning
principle. See \Ref{ref:Aha09} for more details; nevertheless, such
increase seems unavoidable in the classical case.

Except for these difficulties, there is another reason to believe
that there is no quantum PCP theorem, pointed to us by Hastings
\cite{ref:Has:nopcp}.  As we have seen, such theorem implies the
existence of systems in which it is quantum-$\NP$ hard to
distinguish between a vanishing ground energy and ground energy of
the order of the system size. From a physicist point of view, it is
equivalent to determining whether or not the free energy of the
system becomes negative at a \emph{finite} temperature \cite{ref:Pou10}. It
is then argued that at such temperatures, on large scale, the system
loses its quantum characteristics; long-range entanglement effects
must fade. Consequently, the system can be described (approximately)
classically, hence the problem is inside NP.

In this note, we will pursue this direction. The ultimate goal is to
show that for any $(k,r)$, Problem~\ref{def:kr} is inside $\NP$. This, 
however, seems very difficult.  Instead of attacking it directly,
it might be beneficial to show first that a more restricted problem
is inside $\NP$. This is what we do here.

\subsection{Results: A partial No-Go theorem for quantum PCP} 
We are interested in a version of
problem (\ref{def:kr}) in which the projections are two-local,
sitting on the edges of a $D$-regular graph:

\begin{deff}[The $(d,D,r)$-gap Hamiltonian problem on a graph]
\label{def:dr} 

  We consider a QSAT system $H=\sum_{i=1}^M Q_i$ that is defined on
  a $D$-regular graph, using $d$-dimensional qudits that sit on its
  vertices, and projections $\{Q_i\}$ that sit on its edges. We are
  promised that the ground energy of $H$ is either 0 or is greater
  than $rM$ for some constant $0<r<1$, and we are asked to decide
  which is which.
\end{deff}

The advantage of working with this restricted set of problems is
three-fold. First, its classical analog is the outcome of Dinur's
classical PCP proof \cite{ref:Din07}. It is therefore a natural
candidate for a quantum PCP construction. Second, it has a
classical, yet non-trivial limit, which was discovered by Bravyi and
Vyalyi \cite{ref:Bra03}: When the projections commute, the problem
becomes classical in the sense that the ground state of the system
can be described by a shallow tensor-network that can be contracted
efficiently on a classical computer. This tensor-network can be
given as a witness to the prover, and hence the problem is inside
$\NP$. Finally, by itself, this class of systems seems rich enough
to capture non-trivial quantum effects, if exist. For example,
with a $1/\poly$ promise gap, these systems become $\QMA_1$-hard
\cite{ref:Bra06}.

In this note we will go slightly beyond that classical limit. We
will show that for a system which is only slightly non-commuting,
Problem~(\ref{def:dr}) is inside $\NP$ for sufficiently large $r$'s.
In other words, there cannot be a quantum PCP construction that
yields such slightly non-commuting systems with such large $r$'s.

This is the main theorem we wish to prove:
\begin{theorem}
  \label{thm:main} 
  
  For the set of QSAT systems that are defined on a $D$-regular
  graph using $d$-dimensional qudits, the
  following holds: if for every two projections,
  \begin{align}
    \label{eq:comm}
    \norm{[Q_i,Q_j]}\le \delta \ ,
  \end{align}
  and if the system is satisfiable (has a ground energy 0), then
  there exists an efficiently contractable tensor network with
  energy $\le \epsilon M$, where $0<\epsilon<1$ depends only on
  $d,D, \delta$, and $\epsilon\to 0$ as $\delta\to 0$. Consequently,
  Problem~\ref{def:dr} is inside NP when its projections satisfy
  \Eq{eq:comm} and $r>2\epsilon$.
\end{theorem}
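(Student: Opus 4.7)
The plan is to reduce the nearly-commuting case to the exactly-commuting case handled by Bravyi and Vyalyi \cite{ref:Bra03}, by perturbing $\{Q_i\}$ into an exactly-commuting family $\{\tilde{Q}_i\}$ living on the same edges, and then bounding the resulting loss in energy. Specifically, I aim to construct $2$-local projections $\tilde{Q}_i$ on edge $i$ with $\norm{\tilde{Q}_i - Q_i} \le \eta$, where $\eta = \eta(\delta,d,D)$ tends to zero with $\delta$, and then to apply the Bravyi--Vyalyi structural theorem for $2$-local commuting projections of bounded qudit dimension to extract an efficiently contractable tensor network $\ket{\phi}$ that represents (or closely approximates) a ground state of $\tilde{H} = \sum_i \tilde{Q}_i$. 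The final $\epsilon$ in the theorem is then read off from a combination of $\eta$ and the ground energy of $\tilde{H}$.

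For the rounding step I would work vertex by vertex. Around a vertex $v$ there are at most $D$ incident projections, each pair commutes up to norm $\delta$, and the obstruction to commuting is concentrated on the $d$-dimensional factor at $v$. Quantitative finite-dimensional approximation theorems for nearly-commuting Hermitian operators (in the spirit of Lin's theorem and its descendants) then produce exactly-commuting replacements within some $\eta(\delta,d,D)$ in operator norm. Sweeping through the vertices and readjusting each incident projection at most twice (once per endpoint) keeps the total perturbation per projection bounded by a constant multiple of $\eta$, while preserving $2$-locality on the original edges.

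With $\tilde{H}$ in place, I would invoke two bounds. First, since $H$ is satisfiable, evaluating $\tilde{H}$ on any satisfying $\ket{\psi}$ of $H$ gives $\bra{\psi}\tilde{H}\ket{\psi} = \sum_i \norm{\tilde{Q}_i \ket{\psi}}^2 \le M\eta^2$, so $\tilde{H}$ is nearly frustration-free. Second, for the commuting $\tilde{H}$ the Bravyi--Vyalyi decomposition splits each qudit's Hilbert space into a direct sum of tensor factors indexed by incident edges; picking a consistent low-energy sector across vertices yields a polynomially-described tensor network $\ket{\phi}$ with $\bra{\phi}\tilde{H}\ket{\phi} \le M\eta^2$. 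Evaluating on the original system gives $\bra{\phi} H \ket{\phi} \le \bra{\phi}\tilde{H}\ket{\phi} + \sum_i \abs{\bra{\phi}(Q_i - \tilde{Q}_i)\ket{\phi}} \le M\eta^2 + M\eta$, so one sets $\epsilon = \eta + \eta^2$, which vanishes with $\delta$. Containment in $\NP$ then follows because $\ket{\phi}$ is a polynomial-size classical witness whose energy against $H$ can be computed efficiently, separating YES instances (energy $\le \epsilon M$) from NO instances (energy $> rM$) whenever $r > 2\epsilon$.

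The main obstacle is the quantitative rounding: producing an explicit $\eta(\delta,d,D)$ that is independent of graph size and that preserves the $2$-locality of each $Q_i$ on its original edge. Classical near-commuting theorems exist for a small number of Hermitian matrices in fixed dimension, but aggregating them over the $D$ overlapping projections at each vertex without incurring a graph-dependent blow-up requires a careful local procedure. A secondary difficulty is that $\tilde{H}$ need not be exactly frustration-free even when $H$ is, so the argument cannot invoke Bravyi--Vyalyi only in its zero-energy form; extracting an efficient tensor network from a low-energy (rather than strictly ground) commuting system is the extra ingredient needed to complete the proof.
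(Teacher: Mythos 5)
Your overall strategy matches the paper's exactly: perturb the $\{Q_i\}$ vertex by vertex to an exactly-commuting family, invoke Bravyi--Vyalyi to get an efficiently contractable tensor-network ground state of the commuting Hamiltonian, and then transfer the energy bound back to $H$. The quantitative bookkeeping at the end (evaluating the BV witness on $H$, setting $r>2\epsilon$) is also essentially what the paper does. But there is one genuine gap in the middle, and a couple of places where you make your life harder than necessary.

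The gap: your ``sweeping through the vertices'' step tacitly assumes that after you round the projections incident to vertex $v$, the projections incident to a later vertex $v'$ (some of which were already modified at $v$) still pairwise nearly commute, so that you can round again at $v'$. This is not automatic -- an arbitrary perturbation at $v$ could ruin the commutation structure seen from $v'$. The paper devotes Lemma~\ref{lem:main}, property~3, to precisely this point, proving $\norm{[\hat Q_i,P]}_F=\norm{[Q_i,P]}_F$ for every other term $P$, and it proves this by first passing through an operator Schmidt decomposition (Lemma~\ref{lem:prop}) $Q_i=\sum_\alpha A_\alpha^{(i)}\otimes B_\alpha^{(i)}$ so that the replacement only touches the factors living on $v$, with $\norm{a_\alpha}_F=\norm{A_\alpha}_F$. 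Without some such invariance statement the sequential rounding does not close, so this is the missing ingredient you need. A secondary point: you insist on producing commuting \emph{projections} $\tilde Q_i$, which buys you the $\eta^2$ bound but also forces a harder rounding problem; the paper only produces commuting Hermitian $\hat Q_i$, accepts the linear $M\epsilon/2$ bound, and this is enough because Bravyi--Vyalyi's algebraic decomposition works for commuting Hermitian $2$-local terms, not just projections. Relatedly, your worry that BV needs exact frustration-freeness is a non-issue: the BV structure theorem decomposes the Hilbert space independently of the spectrum, so the ground state of the commuting $\hat H$ (whatever its nonzero energy) is already an efficient tensor network. Finally, your appeal to ``Lin's theorem and its descendants'' is the wrong tool here: Lin is about dimension-independent rounding for a pair of Hermitians and fails for three or more; what the paper uses instead is the much softer compactness argument (Halmos) valid for any fixed number $D$ of operators in fixed dimension $d$, which is why $\delta(\epsilon)$ is allowed to depend on $d,D$ and why no explicit functional form is claimed.
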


The idea of the proof is very simple. Using the assumption that the
projections in $H$ nearly commute, we will find an auxiliary
\emph{commuting} system, $\hat{H}=\sum_{i=1}^M \hat{Q}_i$ such that
$\norm{Q_i-\hat{Q}_i}\le \epsilon/2$ for every $i$. By
\Ref{ref:Bra03}, this system has a ground state with an efficient
description, that can  thus be provided as a classical witness to the
$\NP$ verifier. The point is that this state is an $M\epsilon$ approximation of
the ground state, and thus can provide an $\NP$ witness for a
$rM \ge 2\epsilon M$ approximation of the ground energy.

\subsection{Further research}

It is interesting to see if the result of this note can be
strengthened. First functional dependence of $\epsilon$ on
$d,D,\delta$ is not given here, but it is probably not too difficult
to find one by generalizing the results of \Ref{ref:Pea79}. 

To show that Problem~\ref{def:dr} is inside NP for every $(d,D,r)$
we would like to show that for $\delta=1$, one can always find a
tensor-network that would yield an energy $\le \epsilon M$, for an
arbitrarily small, yet constant $\epsilon$. Such a result would be a
very strong indication against quantum PCP.

There are two natural directions that may help to prove such a
result. First, we note that the tensor-network that results from the
construction of Bravyi and Vyalyi is a very simple one. It is
essentially a depth-4 local quantum circuit. Using a classical
computer, we can in fact, contract similar networks with a
logarithmic depth. Can we find a perturbation theory for which the
depth-4 network is just the first order approximation? Then by going
to higher orders, we may systematically lower $\epsilon$ for a given
$d,D,\delta$.

Related to that, one may try to find a reduction of the system to a
system that is more commuting, perhaps by some sort of a
coarse-graining process. This again, might lead to a larger
$\epsilon$ for a given $d,D,\delta$. We now proceed to the proof of
theorem~\ref{thm:main}.

{~}

\section{Proof of theorem~\ref{thm:main}}

\noindent\textbf{Notation:}\\
We use the natural inner product on the space of matrices
\begin{align*}
  \inprod{A}{B} \EqDef \Tr(A^\dagger B) \ ,
\end{align*}
which leads to the Frobenius norm
\begin{align*}
  \norm{A}_F \EqDef \sqrt{\Tr(A^\dagger A)} \ .
\end{align*}

\begin{proof}  
  Using the $C^*$-algebra machinery of \Ref{ref:Bra03},
  we start with the following decomposition of every 2-local
  projection in $H$.
  \begin{lemma}
  \label{lem:prop}
    Let $Q$ be a 2-local projection on $\Qdit \otimes\Qdit$. Then
    one can write
    \begin{align}
      Q = \sum_{\alpha=1}^{d^2}  A_\alpha \otimes B_\alpha \ ,
    \end{align}
    with $A_\alpha$ and $B_\alpha$
    working locally on one particle, with the following properties:
    \begin{enumerate}
      \item $\{A_\alpha\}$ are orthogonal and bounded $\norm{A_\alpha}_F\le 1$.
        
      \item $\{B_\alpha\}$ are orthonormal.
        
      \item The algebra generated by $\{A_\alpha\}$ is close under
        conjugation, and the same applies for $\{B_\alpha\}$.
    \end{enumerate}
  \end{lemma}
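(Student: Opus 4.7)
The plan is to obtain the claimed decomposition as an operator Schmidt decomposition of $Q$. I would view $M_d(\BBC)$ as a $d^2$-dimensional Hilbert space under the Frobenius inner product $\inprod{A}{B} = \Tr(A^\dagger B)$, so that $Q \in M_d(\BBC) \otimes M_d(\BBC)$, and apply the singular value decomposition on this bipartite Hilbert space. This yields $Q = \sum_{\alpha=1}^{d^2} s_\alpha U_\alpha \otimes V_\alpha$ with orthonormal families $\{U_\alpha\}, \{V_\alpha\} \subset M_d(\BBC)$ and singular values $s_\alpha \ge 0$. Setting $A_\alpha \EqDef s_\alpha U_\alpha$ and $B_\alpha \EqDef V_\alpha$ immediately secures the orthogonality of $\{A_\alpha\}$ and the orthonormality of $\{B_\alpha\}$.

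For the algebra condition, I would exploit the Hermiticity of $Q$. Since $Q$ is a projection, $Q = Q^\dagger$, so $Q$ in fact lies in the real subspace $\mathrm{Herm}(\BBC^d) \otimes_{\BBR} \mathrm{Herm}(\BBC^d)$ of Hermitian-by-Hermitian tensors. Performing the SVD within this real Hilbert space yields Hermitian $U_\alpha, V_\alpha$, and any unital algebra generated by Hermitian elements is automatically closed under $\dagger$. Equivalently, one may compare the SVD of $Q$ with that of $Q^\dagger$ and invoke uniqueness of SVD on each singular-value block to show $U_\alpha^\dagger \in \mathrm{span}\{U_\beta : s_\beta = s_\alpha\}$; this makes the linear span of $\{U_\alpha\}$ closed under $\dagger$, and hence so is the generated algebra, with the same argument applied to $\{V_\alpha\}$.

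The step I expect to be the main obstacle is pinning down the norm bound $\norm{A_\alpha}_F \leq 1$. The naive SVD estimate only gives $s_\alpha \le \norm{Q}_F = \sqrt{\mathrm{rank}(Q)}$, which is $\orderof{d}$ rather than $\orderof{1}$. My approach would be to use the variational characterization $s_\alpha = \max_{\norm{U}_F = \norm{V}_F = 1} |\inprod{Q}{U \otimes V}|$ together with the projection property $0 \le Q \le \Id$ (so $\norm{Q}_\infty \le 1$) via a H\"older-type trace inequality, and to restrict the variational problem to the Hermitian subspace so that $U, V$ may be assumed Hermitian when estimating $|\Tr\bigl(Q(U\otimes V)\bigr)|$. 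Should this fail to produce the sharp constant $1$, the downstream perturbative argument of \Thm{thm:main} only needs a $(d,D)$-dependent bound on $\norm{A_\alpha}_F$, so any excess constant can be harmlessly absorbed into the $\eps$ of the main theorem without affecting the conclusion.
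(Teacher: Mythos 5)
Your approach is the same one the paper uses: treat $Q$ as a vector in the bipartite operator Hilbert space under the Frobenius inner product, take the operator Schmidt decomposition, absorb the Schmidt coefficients into the first factor, and invoke Hermiticity of $Q$ for property 3. The paper's proof is a three-sentence sketch; your version fills in the details of the Hermiticity argument (real SVD on $\mathrm{Herm}\otimes_{\BBR}\mathrm{Herm}$, or block-wise uniqueness of the SVD), which is exactly what the paper's one-liner ``the third property follows from the Hermiticity of $Q$'' is gesturing at.

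Your suspicion about the bound $\norm{A_\alpha}_F\le 1$ is well-founded, and you should not lose time trying to establish it: as stated it is simply false. Take $Q=\ket{0}\bra{0}\otimes\Id_d$, a valid $2$-local projection. Its operator Schmidt decomposition has a single term $\sqrt{d}\cdot\bigl(\ket{0}\bra{0}\bigr)\otimes\bigl(\Id_d/\sqrt{d}\bigr)$, giving $\norm{A_1}_F=\sqrt{d}>1$. The only bound the Schmidt decomposition actually gives is $\norm{A_\alpha}_F\le\norm{Q}_F=\sqrt{\Tr Q}\le d$, which is exactly what you called the naive estimate. The paper's proof does not justify $\le 1$ either -- it just asserts that defining $A_\alpha=\lambda_\alpha A'_\alpha$ ``proves 1) and 2)''. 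So you and the paper have the same genuine gap in the statement, but your diagnosis that it is harmless is correct: in the proof of \Lem{lem:main} the constant $1$ is never used; what is needed is only that the $A^{(i)}_\alpha$ live in a bounded (hence compact) set for the compactness argument, and any $d$-dependent bound feeds into the geometric constant $c(d,D)$ and ultimately into $\eps$. So the right fix is to weaken property 1 to $\norm{A_\alpha}_F\le d$ (or ``bounded by a function of $d$''), and your proof then goes through.
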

  \begin{proof}
    We treat $Q$ as a vector in a bipartite Hilbert space of
    operators. Using the Schmidt decomposition in that space, 
    $Q=\sum_\alpha \lambda_\alpha\cdot A'_\alpha\otimes B_\alpha$,
    with $\{A'_\alpha\}$ and $\{B_\alpha\}$ orthonormal. Defining
    $A_\alpha \EqDef \lambda_\alpha A'_\alpha$ then proves 1) and 2). The
    third property follows from the Hermiticity of $Q$.
  \end{proof}
  
  The advantage of working in this representation is that a
  Frobenius distance between the $Q$'s easily translates into a
  Frobenius distance between the $A_\alpha$, and then the same
  applies for the usual operator norm because all norms are
  equivalent on finite dimensional spaces. Specifically, assume the
  above decomposition for adjacent projections $Q_1,Q_2$:
  \begin{align}
  \label{eq:expansion}
    Q_1 &= \sum_\alpha A^{(1)}_\alpha \otimes B^{(1)}_\alpha &
    Q_2 &= \sum_\beta  A^{(2)}_\beta \otimes B^{(2)}_\beta  \ ,
  \end{align}
  where $A^{(1)}_\alpha$ and $A^{(2)}_\beta$ operate on the same qudit, and
  $B^{(1)}_\alpha$ and $B^{(2)}_\beta$ on two other qudits. Then
  by the orthonormality of $B^{(i)}_\alpha$, 
  \begin{align}
    \norm{[Q_1,Q_2]}_F^2 &\EqDef
    \Tr\Big([Q_1,Q_2] ([Q_1,Q_2])^\dagger\Big) \nonumber \\
     &= \sum_{\alpha,\beta} 
       \Tr\Big([A^{(1)}_\alpha,A^{(2)}_\beta]
         ([A^{(1)}_\alpha,A^{(2)}_\beta])^\dagger \\
     &= \sum_{\alpha,\beta}
       \norm{[A^{(1)}_\alpha,A^{(2)}_\beta]}^2_F  \ .
  \end{align}
  Therefore, for every $\alpha,\beta$, we get
  $\norm{[A^{(1)}_\alpha,A^{(2)}_\beta]}_F \le \norm{[Q_1,Q_2]}_F$.
  
  The following lemma tells us that if the operators are slightly
  non-commuting we can find a set of near-by operators which fully
  commute. 
  
  \begin{lemma}
  \label{lem:main} There exists a function $\delta(\epsilon)$ with
    the limit $\lim_{\epsilon\to 0}\delta(\epsilon)\to 0$ such that
    the following holds. Every set of operators $\{ Q_i\} \ , i=1,
    \ldots, D$ that work on a given particle, and for which
    $\norm{[Q_i,Q_j]}\le \delta(\epsilon)$, can be replaced by the
    operators $\{ \hat{Q}_i\}$ with $\norm{Q_i - \hat{Q}_i} \le
    \epsilon/4$, that in addition satisfy the following properties
    \begin{enumerate}
      \item $[\hat{Q}_i, \hat{Q}_j] = 0$
      \item $\hat{Q}_i$ are Hermitian
      \item For any other term in the system, 
          $\norm{[\hat{Q}_i,P]}_F = \norm{[Q_i,P]}_F$, so that the
          system does not become less commuting at other places.
    \end{enumerate}
    Notice that $\delta(\epsilon)$ may depend on $d,D$.
  \end{lemma}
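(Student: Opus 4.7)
The plan is to treat the lemma as a statement about $D$-tuples of matrices in the \emph{fixed} finite-dimensional algebra $M_d(\mathbb{C})$, use compactness to establish the existence of $\delta(\epsilon)$ with the right limit behavior, and then handle each of the three structural properties separately.

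First I would reduce to the Hermitian case. Although the $A_\alpha$'s produced by \Lem{lem:prop} need not be Hermitian individually, the algebra they generate is closed under $*$ by item~3 of that lemma. So every $Q_i$ can be written $Q_i = H_i^{(1)} + i H_i^{(2)}$ with Hermitian parts again in the algebra, and $\norm{[Q_i,Q_j]}\le\delta$ translates into smallness of every cross-commutator $\norm{[H_i^{(s)},H_j^{(t)}]}$. It therefore suffices to prove the lemma for an enlarged collection of at most $2D$ Hermitian operators. Using the rescaling freedom in \Eq{eq:expansion} (any common rescaling of an $A$-part is absorbed in the partner $B$-part), I may also normalize $\norm{Q_i}_F \le 1$.

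Next, the existence of a function $\delta(\epsilon)$ with $\delta(\epsilon)\to 0$ is a compactness/continuity exercise. Let
\[
X = \bigl\{ (Q_1,\ldots,Q_D) \in M_d(\mathbb{C})^D : Q_i^\dagger = Q_i,\ \norm{Q_i}_F \le 1 \bigr\},
\]
which is compact, and let $C \subset X$ be the closed subset of pairwise commuting tuples. Define continuous functions $f(\vec Q) = \max_{i,j} \norm{[Q_i,Q_j]}$ and $g(\vec Q) = \inf_{\vec{\hat Q}\in C} \max_i \norm{Q_i - \hat Q_i}$; both vanish exactly on $C$. For each $\epsilon>0$ the sub-level set $\{g \ge \epsilon/4\}$ is compact and disjoint from $C$, so $f$ attains a strictly positive minimum there, and I take $\delta(\epsilon)$ to be that minimum. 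Then $f(\vec Q) \le \delta(\epsilon)$ forces the existence of a commuting Hermitian tuple within distance $\epsilon/4$, which gives properties~1 and~2. Continuity of $g$ at $C$ forces $\delta(\epsilon)\to 0$ as $\epsilon \to 0$, and the implicit dependence on $d$ and $D$ enters through $X$.

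The hard part is property~3: a purely abstract compactness argument chooses $\hat Q_i$ without any control over $\norm{[\hat Q_i, P]}_F$ against external operators $P$. To address this I would make the construction concrete, implementing the replacement as a single-qudit basis change followed by an off-diagonal truncation: pick a unitary $U \in U(d)$ that brings the $H_i^{(s)}$ as close to simultaneously diagonal as possible (for example by maximizing $\sum_{i,s,k}\abs{\langle k|U H_i^{(s)} U^\dagger|k\rangle}^2$), and set $\hat H_i^{(s)} = U^\dagger\,\mathrm{diag}(U H_i^{(s)} U^\dagger)\,U$. These are Hermitian, pairwise commuting in the $U$-diagonal algebra, and their distance from $H_i^{(s)}$ is bounded by the off-diagonal weight of $U H_i^{(s)} U^\dagger$, which in turn is controlled by $\max_{j,t}\norm{[H_i^{(s)}, H_j^{(t)}]}$ in the spirit of the estimates of \Ref{ref:Pea79}. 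The substantive step is then to verify that projection onto the $U$-diagonal does not enlarge $\norm{[\hat Q_i,P]}_F$ relative to $\norm{[Q_i,P]}_F$ for an arbitrary external operator $P$; this is the step where I expect the real difficulty to sit, and where the quantitative form of $\delta(\epsilon)$ will be pinned down.
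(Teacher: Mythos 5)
Your compactness argument captures the right spirit (and is essentially the Halmos-style argument the paper uses), but you apply it at the wrong level, and this is exactly what leaves property~3 out of reach. The $Q_i$ in the lemma are $2$-local projections acting on pairs $(p,p_i)$ with a common particle $p$; they do not live in a single $M_d(\mathbb{C})$, and a tuple $(\hat Q_1,\ldots,\hat Q_D)$ obtained by abstract compactness over such tuples comes with no control whatsoever over $\norm{[\hat Q_i,P]}_F$ for an external term $P$. You correctly flag this as the hard part, but the diagonalization-truncation you then sketch does not resolve it: projecting onto the $U$-diagonal generically \emph{shrinks} Frobenius norms, so nothing in that construction yields the exact equality $\norm{[\hat Q_i,P]}_F = \norm{[Q_i,P]}_F$ demanded by property~3. (Also, the opening reduction $Q_i = H_i^{(1)} + iH_i^{(2)}$ is moot, since each $Q_i$ is a projection and hence already Hermitian; the non-Hermitian objects are the Schmidt factors $A^{(i)}_\alpha$.)

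The paper's proof avoids the difficulty by running the compactness argument on the Schmidt factors $A^{(i)}_\alpha$ on the shared particle, taken jointly over all $i$, rather than on the $Q_i$ themselves. Because the replacements $a^{(i)}_\alpha$ are chosen as a subsequential limit of the $A^{(i)}_\alpha(n)$, they automatically inherit the same Frobenius norms, the same orthogonality relations, and closure under $*$. Property~3 then falls out of the factorization
\begin{align*}
\norm{[Q,P]}_F^2 = \sum_{\alpha,\beta}\norm{[B_\alpha,C_\beta]}_F^2\,\norm{A_\alpha}_F^2\,\norm{D_\beta}_F^2 \ ,
\end{align*}
which shows that $\norm{[Q,P]}_F$ depends on the $A_\alpha$'s only through their Frobenius norms; hermiticity is then recovered at the end via $\hat Q_i \mapsto \tfrac12(\hat Q_i + \hat Q_i^\dagger)$, which keeps commutativity because the algebra generated by the $a^{(i)}_\alpha$ is closed under conjugation. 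This norm-preservation-through-limit-points observation is the step your proposal is missing, and without it property~3 remains unproven.
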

  
  \begin{proof}
    
    We use \Lem{lem:prop} to decompose
    \begin{align}
    \label{eq:Qi}
      Q_i = \sum_\alpha A^{(i)}_\alpha \otimes B^{(i)}_\alpha  \ .
    \end{align}
    Then by what was said above, for every $i\ne j$ and every
    $\alpha, \beta$, we have 
    $\norm{[A^{(i)}_\alpha,A^{(i)}_\beta]}_F \le \delta$. We now
    claim that the operators $\{ A^{(i)}_\alpha\}$ can be replaced
    by operators $\{ a^{(i)}_\alpha\}$ such that
    $\norm{A^{(i)}_\alpha - a^{(i)}_\alpha}_F \le c(d,D)\epsilon$, where
    $c(d,D)$ is some geometrical constant to be defined later. In addition, the
    operators $\{ a^{(i)}_\alpha\}$ enjoy the following properties
    \begin{enumerate}
      \item For every $i\ne j$ and $\alpha, \beta$,
        $[a^{(i)}_\alpha, a^{(j)}_\beta] = 0$.
      \item For every $i$, $\{a^{(i)}_\alpha\}$ are orthogonal with
      $\norm{a^{(i)}_\alpha}_F = \norm{A^{(i)}_\alpha}_F$.
      \item For every $i$, the algebra that is generated by
        $\{a^{(i)}_\alpha\}$ is close under conjugation.
    \end{enumerate}
    
    To show that these operators exist, together with the function
    $\delta(\epsilon)$, we use a neat argument from \Ref{ref:Hal76}
    (page 76)\footnote{This nice trick was first brought to my
    attention by Matthew Hastings \cite{ref:Has:commuting}}: Assume
    by contradiction that this is not true. Then there is a
    $\epsilon_0>0$ and a series of operators $\{A^{(i)}_\alpha(n)\}$
    and $\delta_n \to 0$ such that 
    \begin{align}
    \label{eq:limit0}
      \text{for every $i\ne j$ and $\alpha,\beta$}\qquad
      \norm{[A^{(i)}_\alpha(n), A^{(j)}_\beta(n)]}_F \le \delta_n \to 0 \ ,
    \end{align}
    and at the same time, for any set operators $\{a^{(i)}_\alpha\}$
    that fulfills the 3 requirements,  for every $n$, there is a at least one
    $a^{(i)}_\alpha$ such that
    \begin{align}
    \label{eq:contradiction}
      \norm{A^{(i)}_\alpha(n) - a^{(i)}_\alpha} \ge c(d,D)\epsilon_0 \ .
    \end{align}
    
    But this is obviously wrong, since as we are working in a
    compact space (we work with bounded operators in a
    finite-dimensional Hilbert space), the series
    $\{A^{(i)}_\alpha(n)\}$ must have at least one limit point.
    Denote this point by $\{a^{(i)}_\alpha\}$. Then it is easy to
    see that by \Eq{eq:limit0} that property 1) must hold, and by
    \Lem{lem:prop} and the fact that they are a limit point of
    $A^{(i)}_\alpha(n)$, properties 2) and 3) must hold.
    
    Having the operators $\{ a^{(i)}_\alpha\}$ at our hands, we
    construct $\{\hat{Q}_i\}$ by simply replacing $A^{(i)}_\alpha
    \mapsto a^{(i)}_\alpha$ in \Eq{eq:Qi}. By definition, the new
    operators commute between themselves, proving property 1). In
    addition, by using the equivalence of the operator norm and the
    Frobenius norm, it is an easy exercises to choose $c(d,D)$ such
    that $\norm{\hat{Q}_i-Q_i}\le \epsilon/4$. To prove 3), let us consider
    $Q \in \{Q_i\}$ and its replacement $\hat{Q}$, as well as a
    third operators $P$ that intersects with $Q$ but that does not
    belong to $\{Q_i\}$. We use \Lem{lem:prop} to write
    $Q=\sum_\alpha A_\alpha \otimes B_\alpha$, $\hat{Q}=\sum_\alpha
    a_\alpha \otimes B_\alpha$, $P=\sum_\beta C_\beta \otimes
    D_\beta$, where it is assumed that $B_\alpha$ and $C_\beta$ work
    on the same particle. Then using the orthogonality of $A_\alpha,
    a_\alpha$ and $D_\beta$, we conclude that
    \begin{align*}
      \norm{[Q,P]}^2_F &= \sum_{\alpha,\beta}
      \norm{[B_\alpha,C_\beta]}^2_F \cdot \norm{A_\alpha}^2_F
       \cdot \norm{D_\beta}^2_F \ , \\
      \norm{[\hat{Q},P]}^2_F &= \sum_{\alpha,\beta}
      \norm{[B_\alpha,C_\beta]}^2_F \cdot \norm{a_\alpha}^2_F
       \cdot \norm{D_\beta}^2_F \ ,
    \end{align*} 
    and so 3) follows from the fact that $\norm{A_\alpha}_F =
    \norm{a_\alpha}_F$.
    
    Finally, we show how the $\hat{Q}_i$ can be made Hermitian. This
    essentially follows from the fact that for every $i$,
    $\{a^{(i)}_\alpha\}$ are closed under conjugation. Using this
    property, it is easy to see that not only $[\hat{Q}_i,
    \hat{Q}_j]=0$, but that also $[\hat{Q}^\dagger_i, \hat{Q}_j]=0$.
    So by replacing $\hat{Q}_i \mapsto \frac{1}{2}(\hat{Q}_i +
    \hat{Q}_i^\dagger)$ we make the operators Hermitian while
    maintaining the other useful properties.
    
  \end{proof}

  We can now finish the proof of Theorem~\ref{thm:main}. Using the
  last lemma, we construct a new 2-local Hamiltonian by sequentially
  going over all particles and replacing $Q_i \mapsto \hat{Q}_i$. We
  obtain a new system $\hat{H}=\sum_i \hat{Q}_i$, which is commuting
  and the distance between two corresponding operators $Q_i$ and
  $\hat{Q}_i$ is 
  \begin{align}
    \label{eq:dist}
    \norm{ Q_i - \hat{Q}_i} \le \epsilon/2 \ .
  \end{align}
  (This is because $Q_i$ works on 2 particles, and so it undergoes
  two replacements, giving us a total error of
  $\epsilon/4+\epsilon/4=\epsilon/2$). Let
  $\ket{\psi_0}$ be the ground state of the original system. By
  assumption, $\bra{\psi_0}H\ket{\psi_0}=0$. Then by \Eq{eq:dist},
  $\bra{\psi_0}\hat{H}\ket{\psi_0} \le M\epsilon/2$, hence the ground
  energy of $\hat{H}$ is at most $M\epsilon/2$. By \Ref{ref:Bra03},
  $\hat{H}$ has a ground state $\ket{\psi_0'}$ that can be written
  as an efficient tensor-network. By using \Eq{eq:dist} once
  more, we see that $\bra{\psi_0'}H\ket{\psi'_0} \le \epsilon M$, as
  required.
\end{proof}

{~}

\noindent{\bf Acknowledgments}\\ 
I wish to thank Dorit Aharonov,
Matt Hastings, Lior Eldar, , Zeph Landau, Tobias Osborne, Umesh
Vazirani, for inspiring discussions about the quantum PCP and to
Dorit Aharonov for her great help on the manuscript. 

I gratefully acknowledge support by Julia Kempe's ERC Starting
Researcher Grant QUCO and ISF 759/07.

\bibliographystyle{alpha}
\bibliography{QC}

\end{document}